\documentclass{JCIN18}
\usepackage{bm}
\usepackage{graphicx}
\usepackage{subfig}

\newcommand{\spaceblank}{\vskip 4mm}

\newtheorem{Definition}{Definition}
\newtheorem{Problem}{Problem}
\newtheorem{Lemma}{Lemma}

\newtheorem{Algorithm}{Algorithm}
\newtheorem{Policy}{Policy}

\newtheorem{Remark}{Remark}

\begin{document}

\ArticleType{Research paper}
\Year{2019}

\title{Cooperative Job Dispatching in Edge Computing Network with Unpredictable Uploading Delay}

\author[1]{Bojie Lv}
\author[1,2]{Yuncong Hong}
\author[3]{Haisheng Tan}
\author[2]{Zhenhua Han}
\author[1]{Rui Wang}

\address[1]{Department of Electrical and Electronic Engineering, The Southern University of Science and Technology (SUSTech)}
\address[2]{Department of Computer Science, The University of Hong Kong (HKU)}
\address[3]{School of Computer Science and Technology, The University of Science and Technology of China (USTC)}

\abstract{In this paper, the cooperative jobs dispatching problem in an edge computing network with multiple access points (APs) and edge servers is considered. Due to the uncertain traffic in the network between APs and edge servers, the job uploading delay can not be predicted accurately. Specifically, the job arrivals at the APs, the job uploading delay from APs to edge servers and the job computation time at the edge servers are all modeled as random variables. Since each job dispatching decision will affect the system state in the future, we formulate the joint optimization of jobs dispatching at all the APs and all the scheduling time slots as an infinite-horizon Markov decision process (MDP). The minimization objective is a discounted measurement of the average processing time per job, including the uploading delay, the waiting time and the computation time at the edge servers. In this problem,
	 the approximate MDP should be adopted to address the curse of dimensionality. Conventional low-complexity approximate solution of MDP is usually hard to predict the performance analytically. In this paper, a novel approximate MDP solution framework is proposed via one-step policy iteration over a baseline policy, where the analytical performance bound can be obtained. Moreover, since the expression of the approximate value function is derived, the value iteration in conventional methods can be eliminated, which can essentially reduce the computation complexity. It is shown by simulations that proposed low-complexity algorithm has significantly better performance than various benchmark schemes.}

\keywords{Edge Computing, Markov decision process (MDP), Approximate MDP, Jobs Dispatching.}

\maketitle

\section{Introduction}
A number of emerging mobile applications, such as face recognition, speech recognition and high-definition video rendering, are computation-intensive and delay-sensitive. Because of the limited computation resource and battery capacity, it is promising for the mobile devices to upload their computation-intensive jobs to edge servers with much more powerful computation capability. In this paper, we focus on the jobs dispatching optimization between multiple access points (APs) and edge servers via a network with random job uploading delay. Unlike cellular communications, the uploading delay between APs and edge servers is hard to control due to unpredictable traffics in the network. We shall address the dispatching optimization in this scenario via a novel approximation Markov decision process (MDP) method, whose performance can be analytically bounded.

\subsection{Related Works}
The scheduling algorithm design for edge computing systems has attracted tremendous research attentions.
There have been a number of works considering the radio resource management for mobile edge computing systems. For example, the authors in \cite{Junzhang2016} minimized the average energy consumption in a single-user system via Lyapunov optimization approach. In \cite{KBHuang2015}, the authors derived the closed-form expressions of job uploading decisions and the allocation of computation and radio resources in a single-user system powered by wireless energy transfer. 
Considering the dynamic of CPU state (busy or idle) at the edge server, the authors in \cite{KBHuang2019} proposed a dynamic job offloading
	algorithm to minimize the average energy consumption in single-user system via finite-horizon MDP.
There are also a significant number of works considering the edge computing scenarios with multiple mobile users and single edge server. For example, the radio and computation resources allocation to guarantee user fairness and delay constraint in a multi-user system was considered in \cite{Du2018}. The authors in \cite{KBHuang2016} proposed an optimal threshold-based uploading algorithm for mobile users. In order to minimize a weighted summation of total energy consumption and uploading delay, the authors in \cite{XuChen2016} proposed a distributed job uploading algorithm based on game theory. All these works considered the scheduling algorithm design in a wireless cell with single edge server. Moreover, the authors considered joint optimization of service caching and job uploading with multiple edge servers in \cite{jieXu2018}. 
	Due to the limited storage space, edge servers can not process all job types.  An online and decentralized scheduling algorithm is proposed to minimize the computation delay under total computation energy constraint.
All the above works consider the scheduling of wireless transmission from mobile users to APs. In fact, there is also dispatching issue between APs and edge servers. For example, in a computer network, the job uploading delay from users or APs to edge servers is not negligible \cite{Tan2017,liang2017}. Moreover, the delay may be unpredictable, as it may be jammed by other traffics.

There are also some works on the jobs dispatching design 
 in computer networks. For example, without any job arrival information, the authors in \cite{Tan2017} designed an online algorithm for jobs dispatching in edge computing systems to achieve a good competitive ratio. Given a consistent network transmission delay, the edge server placement and static jobs dispatching are jointly optimized in \cite{liang2017} to minimize overall job uploading delay. The uploading delay and job computation time are assumed to be constant in \cite{Tan2017,liang2017}. 
In practice, however, the network traffic between APs and edge servers is usually complicated.
	The job uploading path is usually established with dynamic routing algorithm, unpredictable backlogs on the routers, and burst network flows from other services \cite{liang2015}.
As a result, it may be impractical to assume that the uploading delay is deterministic in edge computing systems, and new algorithm design framework addressing random uploading delay becomes necessary.

\begin{figure*}
	\centering
	\includegraphics[width=0.60\textwidth]{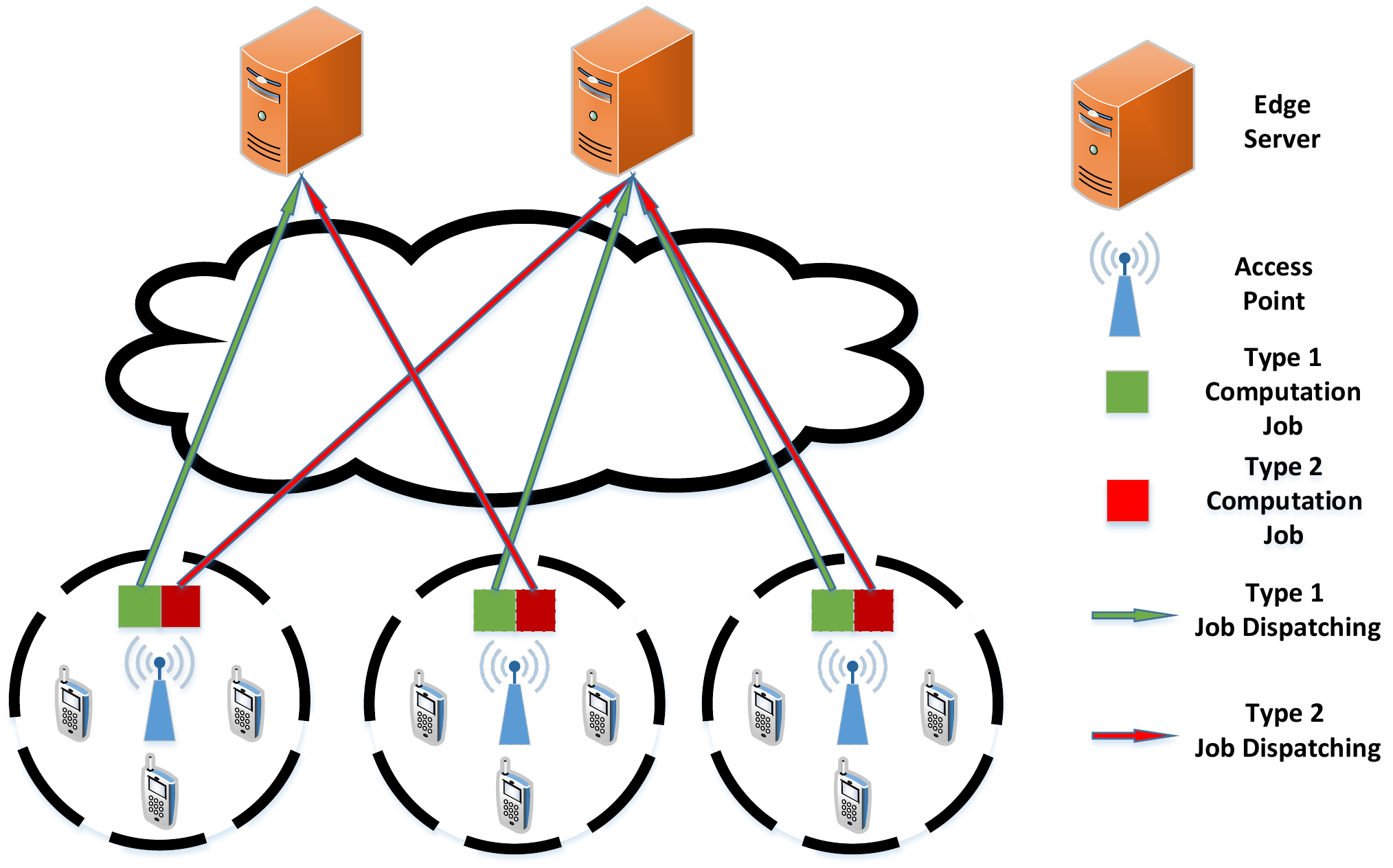}
	\caption{Illustration of the edge computing network.}
	\label{fig:system}
\end{figure*}

Finally, MDP is a powerful tool for resource allocation of communication networks with random transition of system state. For example,  
infinite-horizon average cost MDP have been used in delay-aware radio resource management \cite{Ruiwang2011,Cui2012,Ruiwang2013}. Joint optimization of file placement and delivery in cache-assisted wireless networks can be solved via finite-horizon MDP \cite{Lv2018-icc,Lv2018-gc,Lv2019}. Various value function approximation methods have been used in \cite{Ruiwang2011,Cui2012,Ruiwang2013,Lv2018-icc,Lv2018-gc,Lv2019} 
to address the {\em curse of dimensionality}. However, there is no analytical performance bound on the proposed approximation algorithms.

\subsection{Our Contributions}
In this paper, we would like to shed some lights on the above issue by optimizing the jobs dispatching from multiple APs to multiple edge servers via a network with unpredictable uploading delay. Specifically, we consider the {\em unrelated machines model} so that different edge servers may have different processing capability on each job type. 
The job arrivals, 
 job uploading delay from APs to edge servers and the job computation time at the edge servers are all modeled via random variables.
 Our contributions in this jobs dispatching scenario are summarized below.
\begin{itemize}
	\item We formulate the joint optimization of jobs dispatching in all the APs and time slots as an infinite-horizon MDP, where the minimization objective is a discounted measurement of job processing time, including the uploading delay, the waiting time and the computation time  at
		edge servers. The issue of random uploading delay and computation time is addressed via the state transition distribution of MDP formulation.
	\item Conventional MDP problems suffer from {\em curse of dimensionality}.
	In order to address this issue, a novel approach of value function approximation is proposed for the above infinite-horizon MDP with discounted cost, where the expressions of approximated value function is derived. Hence, the complicated value iteration is avoided. Moreover, with this new approach, the performance of the proposed dispatching algorithm can be analytically bounded. 
\end{itemize}

The remainder of this paper is organized as follows. The system model is
presented in Section 2. Problem formulation and low-complexity scheduling are illustrated in Section 3 and Section 4, respectively.
In Section 5, numerical simulations are conducted. Finally, the work is
concluded in Section 6.

	We use the following notation throughout this paper: bold lowercase $\textbf{a}$ is used to denote column vectors, bold uppercase $\textbf{A}$ is used to denote matrices, non-bold letters a, $A$ are used to denote scalar values, and caligraphic letters $\mathcal{A}$ to denote sets. Using these notations,  $\lceil a \rceil$ is the smallest integer not
	smaller than $a$; $\left[\textbf{A}\right]_{i,j}$ and $\textbf{A}^\mathsf{T}$ denote the $(i,j)$-th element of $\textbf{A}$ and transpose respectively. $\mathbf{I}$ denotes identity matrix. $\mathbb{G}(p)$ denotes the geometric distribution with parameter $p$; $\mathbb{B}(n,p)$ denotes binomial distribution with parameters $n$ and $p$; $\mathbb{E}\left[.\right]$ denotes an expectation operator; $\mathrm{I}(.)$ denotes an indicator function; $\mathbb{R}^{M\times N}$ denotes spaces of $M\times N$ matrices with real entries.\\

\section{System Model}

In this section, we introduce the model of the edge computing system considered in this paper, including the statistical models of job arrival, uploading and computation.

\subsection{Network Model}
We consider an edge computing system with $K$ access points (APs) and $M$ edge servers, which are connected in a network as illustrated in Fig.\ref{fig:system}. The sets of APs and edge servers are denoted as $\mathcal{K}\triangleq\{1,...,K\}$ and $ \mathcal{M}\triangleq\{1,...,M\}$, respectively.
Each AP collects the computation jobs from the mobile users within its service area, and uploads each job to one of the edge servers. Without loss of generality, it is assumed that there are $J$ types of computation jobs supported in this system, which are denoted via the set $\mathcal{J} \triangleq \{1,\dots,J\}$. The edge severs may have different processing capability on different job types. The APs and edge servers may be deployed in an open network (e.g., metropolitan area network) with other traffics (e.g., video streaming and file delivery).
	It is shown in a number of existing literature \cite{Tan2017,liang2017} that the job uploading delay is not negligible compared with the computation time. Moreover, due to the randomness of network traffics,	
	the job uploading delay is assumed to be random. In this paper, we shall optimize the computation edge server for each job type at the APs, according to the distribution of job uploading delay, the queuing status and the job processing capability of edge servers.

The time axis is organized by time slots in order to facilitate the dispatcher design. The job arrival in each time slot is modelled via Bernoulli distribution. Specifically, the arrivals of the $j$-th job type at the $k$-th AP in different time slots are independent and identically distributed (i.i.d.) Bernoulli random variables, and the arrival probability is denoted as $\lambda_{k,j}$ ($\forall k\in\mathcal{K}, j\in\mathcal{K}$). Let $A_{k,j}(t)\in\{0,1\}$ be the indicator of job arrival, where $A_{k,j}(t)=1$ means one job of the  $j$-th type arrives at the $k$-th AP in the $t$-th time slot, and $A_{k,j}(t)=0$ means otherwise. Hence,
\begin{equation}
	\Pr \bigg( A_{k,j}(t)=1 \bigg) = \lambda_{k,j}, \ \ \forall k,j,t.
\end{equation}

At the beginning of each time slot, APs dispatch each type of jobs arrived in the previous time slot to one edge server. Thus, the APs make decisions on the mapping from job types to edge severs in each time slot. We shall refer to these decisions in each time slot as dispatching actions. Let $\omega_{k,j}(t)\in\mathcal{M}$ denotes the index of edge server, to which the $k$-th AP dispatches the job of the $j$-th type in the $t$-th time slot. The dispatching action of the system in the t-th time slot can be represented as
\begin{align*}
\{\omega_{k,j}(t)|\forall k\in\mathcal{K},\forall j\in\mathcal{J}\}.
\end{align*}

 Different types of jobs may have different distributions on the input data size. Moreover, the network between APs and edge servers may be jammed by other traffics. The job uploading delay from one AP to one edge server cannot be predicted accurately by APs. Instead, it is assumed in this paper that the uploading delay follows independent geometric distribution. Denote the geometric delay distribution of the $j$-th job type from the $k$-th AP to the $m$-th edge server as $\mathbb G\left(1/\bar{U}_{k,j}^{m}\right)$, where $\bar{U}_{k,j}^{m}$ is the expectation of the distribution. 

\begin{Remark}[Memoryless Uploading Delay Distribution] \emph{
 The  geometric distribution has the memoryless property. For example, let $U^m_{k,j}(t)$ be the uploading delay of the job of the $j$-th type which is dispatched from the $k$-th AP to the $m$-th edge sever in the $t$-th time slot. Then,
 $\forall n>0, s>0, t, k\in\mathcal{K},j\in\mathcal{J},m\in\mathcal{M}$, 
 \begin{align*}
 \Pr\bigg({U}_{k,j}^{m}(t)> n+s\bigg|{U}_{k,j}^{m}(t)>n\bigg)=\Pr\bigg({U}_{k,j}^{m}(t)>s\bigg).
 \end{align*}
 As a result, the statistics of job arrivals at the edge servers depend only on the number of jobs which are being delivered from APs to edge servers. It is not necessary for the APs to record the number of time slots for which these jobs has been delivered from the AP.
  However, our proposed algorithm is not limited to the geometric delay distribution. It
   can be easily extended to the scenarios that the job uploading delay follows other distributions. We use the geometric distribution as it can simplify the notation system.}
\end{Remark}

Let $N_{k,j}^m(t)$ be the number of the jobs of the $j$-th type, which is being uploaded from the $k$-th AP to the $m$-th edge server at the beginning of the $t$-th time slot, $D_{k,j}^m(t)\in\{0,1,\dots,N_{k,j}^m(t)\}$ be the number of the jobs of the $j$-th type which arrive at the $m$-th edge server from the $k$-th AP in the $t$-th time slot, respectively.
As a remark notice that the data of the jobs in $N_{k,j}^m(t)$ have not arrived at the $m$-th edge server by the beginning of the $t$-th time slot. Due to the random uploading delay, some of these jobs may arrive during the $t$-th time slot, which are measured by $D_{k,j}^m(t)$. Hence, $D_{k,j}^m(t)$ follows binomial distribution with expectation $N_{k,j}^m(t)/\bar{U}_{k,j}^{m}$, i.e., $D_{k,j}^m(t)\sim \mathbb{B}(N_{k,j}^m(t),1/\bar{U}_{k,j}^{m})$,
and the probability mass function (PMF) of $D_{k,j}^m(t)$ is given by
\begin{align}
\Pr\left(D_{k,j}^m(t)=n\right)&=\binom{N_{k,j}^m(t)}{n}\left(\frac{1}{\bar{U}_{k,j}^{m}}\right)^n\left(1-\frac{1}{\bar{U}_{k,j}^{m}}\right)^{N_{k,j}^m(t)-n}, \nonumber\\ 
\forall n&=0,1,\dots,
N_{k,j}^m(t).
\end{align}
Hence, given job arrival process $A_{k,j}(t)$ and jobs dispatching decision $\omega_{k,j}(t)$,  the dynamics of $N_{k,j}^m(t+1)$ ($\forall t, k\in \mathcal{K}, m\in \mathcal{M}, j \in \mathcal{J}$)  can be expressed as
\begin{align}
N_{k,j}^m(t+1)=&N_{k,j}^m(t)+A_{k,j}(t)\mathrm{I}\Big(\omega_{k,j}(t)=m\Big)-D_{k,j}^m(t).
\end{align}
In the above equation, $\mathrm{I}(\mathcal{E})$ is the indicator function, whose value is $1$ when the event $\mathcal{E}$ is true and $0$ otherwise.

\subsection{Computation Model}
There are $J$ virtual machines (VMs) on each edge server for the computation of $J$ job types, respectively. For each type, the uploaded jobs are computed in a first-come-first-serve (FCFS) manner.
Hence, a processing queue with maximum $L_\text{max}$ jobs is established for each VM, and the first job is computed. The arrival jobs will be discarded when the processing queue is full.
Denote $L_{m,j}(t)\in\{0,1,\dots,L_{\text{max}}\}$ as the number of the jobs of the $j$-th type at the $m$-th edge server at the beginning of the $t$-th time slot.

We adopt the \emph{unrelated machines} assumption as in \cite{Tan2017} for job computation on edge servers. Specifically, it is assumed that different types of jobs have different distributions of computation time at each edge server.
We denote $f_{m,j}(x)$ as the PMF of computation time distribution of the j-th job type at the m-th edge server ($\forall m\in\mathcal{M}, j \in \mathcal{J}$).
 Let $\eta_{m,j}(t)\in\{0,1,\dots,\eta_{\text{max}}\}$ be the remaining computation time slots of the first job at the $j$-th VM at the beginning of the $t$-th time slot, where $\eta_{\text{max}}$ denotes the maximum number of computation time slots for each job. Then dynamics of  $\{\eta_{m,j}(t)|\forall t\}$ are summarized below: 
 \begin{itemize}
	\item When the $\eta_{m,j}(t)>1$,  $\eta_{m,j}(t+1)=\eta_{m,j}(t)-1$;
	\item When the the computation of the first job is finished in the $t$-th time slot ($ \eta_{m,j}(t)= \{0,1\} $) and there are no job in the processing queue ($ L_{m,j}(t)=0 $), ${\eta}_{m,j}(t+1)=0$;
	\item When the the computation of the current job is finished in or before the $t$-th time slot ($ \eta_{m,j}(t)= 1 $) and $ L_{m,j}(t)>0 $, the distribution of $\eta_{m,j}(t+1)$ is given by 
	\begin{align}
	\Pr\bigg(\eta_{m,j}(t+1)=x\bigg)=f_{m,j}(x), \forall t, x\in \{0,1,\dots,L_{\text{max}}\}.
	\end{align}
\end{itemize}
Moreover, the dynamics of $L_{m,j}(t)$ can be expressed as
\begin{align}
&L_{m,j}(t+1) \nonumber\\ 
&\quad=\min\bigg\{L_{m,j}(t)-\mathrm{I}\Big(\eta_{m,j}(t)=1\Big)+\sum_{\forall k \in \mathcal{K}}D_{k,j}^m(t),\quad L_{\text{max}}\bigg\},\nonumber\\ 
&\quad\quad\forall t, m\in\mathcal{M},j\in\mathcal{J}.
\end{align}
 
In the remaining of this work, we shall refer to $$Q_{m,j}(t)\triangleq\bigg(L_{m,j}(t),\eta_{m,j}(t)\bigg)$$ as the queuing state information (QSI) of the $j$-th type job at the $m$-th edge server at the beginning of the $t$-th time slot.

\section{Infinite-Horizon MDP Formulation}

Since the jobs dispatching in one time slot will affect the system status (e.g., QSI of the edge servers) of the following time slots.
 The joint optimization of jobs dispatching in all the time slots is necessary. In this section, we shall formulate such joint optimization as a MDP. 

\subsection{System State and Scheduling Policy}

We first define the system state $\mathbf{S}$ and scheduling policy $\Omega$ as follows. 

\begin{Definition}[System State] \em
	At the beginning of the $t$-th time slot, the system state of the $j$-th job type is represented as  $\mathbf{S}_j(t)\triangleq(\mathbf{N}_j(t),\mathbf{Q}_j(t))$, which consists of
\begin{itemize}

		\item The number of jobs being uploaded:
	\begin{align}
	\mathbf{N}_j(t)\triangleq\{N_{k,j}^m(t)|\forall k \in \mathcal{K}, \forall m \in \mathcal{M} \};
	\end{align}

	\item Queuing state information (QSI) of the edge servers:
	\begin{align}
	\mathbf{Q}_j(t)\triangleq \{Q_{m,j}(t)|\forall m \in \mathcal{M}\}.
	\end{align}
	
\end{itemize}
		
	Moreover, the aggregation of system state of all the type of jobs is referred to as the system state $\mathbf{S}(t)$, i.e., $\mathbf{S}(t) \triangleq \{\mathbf{S}_j(t) | \forall j\in\mathcal{J} \}$.

\end{Definition}

It is assumed that all the APs and edge servers will broadcast their latest status at the end of every time slot, and the APs are able to collect the complete system state at the beginning of each time slot (i.e., $\mathbf S (t)$ at the beginning of the $t$-th time slot), so that the decision on jobs dispatching can be made accordingly. In this work, we ignore the delay of system state broadcasting, as the message size is small. Hence, the jobs dispatching policy is defined below.

\begin{Definition}[Jobs Dispatching Policy] \em
	 In the $t$-th time slot, the dispatching policy of the $j$-th job type, denoted as $\Omega_j$, is a mapping from system state $\mathbf{S}(t)$ to the jobs dispatching action  $\{ \omega_{k,j}(t)|\forall k\in\mathcal{K} \}$, i.e.,
		\begin{align}
		\Omega_{j}(\mathbf{S}(t))\triangleq\{\omega_{k,j}(t)|\forall k\in\mathcal{K}\}, \forall t.
		\end{align}
	Moreover, the aggregation of dispatching polices of all the job types is referred to as the system dispatching policy $\Omega$, i.e., $\Omega \triangleq \{\Omega_j | \forall j\in\mathcal{J} \}$.
\end{Definition}

\subsection{Problem Formulation}
According to the Little's law, the average processing time per job of the edge computing system, measuring the number of time slots from job arrival to the completeness of computation, is proportional to the average number of jobs in the system. In this paper, we use the discounted summation of job numbers in all the time slots as the approximation of average processing time. Specifically, we first define the following weighted sum of the job number and job overflow penalty as the system cost at the $t$-th time slot. 
\begin{align}
&g\bigg(\mathbf{S}(t),\Omega(\mathbf{S}(t))\bigg)  
\triangleq\sum_{j\in\mathcal{J}}\bigg\{\nonumber\\
&\underbrace{\sum_{k\in\mathcal{K}}\sum_{m\in\mathcal{M}}N_{k,j}^m(t) +\sum_{m\in\mathcal{M}}\bigg(L_{m,j}(t) 
+\beta\mathrm{I}\Big(L_{m,j}(t)=L_{\text{max}}\Big)\bigg)}_{g_j(\mathbf{S}_j(t),\Omega_j(\mathbf{S}_{j}(t)))}\bigg\},
\end{align}
where $\beta$ is a weight, and $g_j(\mathbf{S}_j(t),\Omega_j(\mathbf{S}_{j}(t)))$ denotes the system cost of $j$-th type in the $t$-th time slot. The overall system cost of all the time slots with the initial system state $\mathbf{S}$ is then given by
\begin{align}
\bar{G}(\Omega,\mathbf{S})\triangleq \lim\limits_{T\to \infty}\mathbb{E}_{\{\mathbf{S}(t)|\forall t\}}^{\Omega}\bigg[\sum_{t=1}^{T} \gamma^{t-1}g\bigg(\mathbf{S}(t),\Omega(\mathbf{S}(t))\bigg)\bigg|\mathbf{S}(1)=\mathbf{S} \bigg],
\end{align}
where $\mathbb{E}_{\{\mathbf{S}(t)|\forall t\}}^{\Omega}[.]$ denotes the expectation with respect to all possible system states in the future given dispatching policy $\Omega$, and $\gamma$ is the discount factor. As a result, the cooperative jobs dispatching design can be formulated as the following infinite-horizon MDP.
\begin{Problem}[Cooperative Jobs Dispatching Problem]\label{Pro:main} \em
	\begin{align}
			\Omega^*&=\mathop{\arg\min}_{\Omega} \bar{G}(\Omega,\mathbf{S}) .
	\end{align}

	\end{Problem}

The optimal policy of Problem \ref{Pro:main} can be obtained by solving the following Bellman's equations \cite{DP}, .
\begin{align}\label{eqn:bellman}
V(\mathbf{S}(t))=&\min_{\Omega(\mathbf{S}(t))} g\bigg(\mathbf{S}(t),\Omega(\mathbf{S}(t))\bigg)\nonumber\\
&+\gamma\sum_{\mathbf{S}(t+1)}\Pr\bigg(\mathbf{S}(t+1)\bigg|\mathbf{S}(t),\Omega(\mathbf{S}(t))\bigg)\nonumber\\
&\quad\quad\quad\quad\quad\quad\times V(\mathbf{S}(t+1)),
\forall \mathbf{S}(t),
\end{align}
where $V(\cdot)$ denotes the value function of the optimal policy $\Omega^*$. It is proven in \cite{DP} that, $V(\mathbf S)$ represents the average system cost with initial system state $\mathbf S$ and optimal scheduling policy $\Omega^{*}$, i.e.,
\begin{eqnarray}
	V(\mathbf S) = \lim\limits_{T\to \infty}\mathbb{E}_{\{\mathbf{S}(t)|\forall t\}}^{\Omega^{*}}\bigg[\sum_{t=1}^{T} \gamma^{t-1}g\bigg(\mathbf{S}(t),\Omega^{*}(\mathbf{S}(t))\bigg)\bigg|\mathbf{S}(1)=\mathbf{S} \bigg]. \nonumber
\end{eqnarray}
	The system state transition probability can be written as
	\begin{align}\label{eqn:decouple}
	&\Pr\bigg(\mathbf{S}(t+1)\bigg|\mathbf{S}(t),\Omega(\mathbf{S}(t))\bigg) \nonumber\\
	=&\prod_{j\mathcal{J}}\Pr\bigg(\mathbf{S}_j(t+1)\bigg|\mathbf{S}_j(t),\Omega_j(\mathbf{S}_j(t))\bigg) \nonumber\\
		=&\prod_{j\in \mathcal{J},\forall k \in \mathcal{K}, \forall m \in \mathcal{M}}\Pr\bigg(N_{k,j}^m(t+1)\bigg|\mathbf{S}_j(t),\Omega_j(\mathbf{S}_j(t))\bigg)\nonumber\\
	&\times\prod_{j\in \mathcal{J},\forall m \in \mathcal{M}}\Pr\bigg(Q_{m,j}(t+1)\bigg|\mathbf{S}_j(t),\Omega_j(\mathbf{S}_j(t))\bigg).
	\end{align}

Generally speaking, the standard value iteration can be used to solve the value function $V(.)$ for all possible system states, and the optimal policy denoted as $\Omega^{*}$, can be derived by solving the minimization problem of the right-hand-side of the  Bellman's equations in \eqref{eqn:bellman}. In our problem, however, the conventional value iteration is intractable due to the tremendous state space. For example, the number of system states grows exponentially with respect to the number of APs and edge servers. Hence, a novel low-complexity sub-optimal solution is proposed in the following section, whose performance can be bounded analytically. As a remark notice that it is difficult to obtain an analytical bound for the existing approximate MDP solution methods as in \cite{Ruiwang2011,Cui2012,Ruiwang2013}.

\section{Low-Complexity Scheduling Policy}
In this section, we first introduce a heuristic scheduling
policy as the baseline policy, whose value functions are derived
analytically. Then, the proposed low-complexity sub-optimal
policy can be obtained via the above value function and one-step
policy iteration. The derived value function of the baseline
policy becomes the cost upper bound of the proposed sub-optimal
policy.

\subsection{Baseline Scheduling Policy}
The baseline scheduling policy with fixed dispatching action is elaborated below.
\begin{Policy}[Baseline Scheduling Policy $\Pi$]\label{Pol:baseline} \em The following dispatching policy $\Pi$ is adopted as the baseline policy.
\begin{align}
\Pi\triangleq\{\omega_{k,j}(t)=\omega_{k,j}^{\Pi}|\forall t,k,j\},
\end{align}
where $\omega_{k,j}^{\Pi}\in\mathcal{M}$ denotes the index of the fixed edge server for the processing of the $j$-th job type from the $k$-th AP.
\end{Policy}

Given the system state $\mathbf{S}$ in the first time slot, the value function of policy $\Pi$ is defined as
\begin{align}
V_{\Pi}(\mathbf{S})\triangleq \lim\limits_{T\to \infty}\mathbb{E}_{\{\mathbf{S}(t)|\forall t\}}^{\Pi}\bigg[\sum_{t=1}^{T} \gamma^{t-1}g(\mathbf{S}(t),\Pi)\bigg|\mathbf{S}(1)=\mathbf{S} \bigg].
\end{align}
In order to derive its analytical expression, we let 
\begin{align}
{d}_{k,j,m}^{\text{AP}}(N_{k,j}^m) \triangleq \lim\limits_{T\to \infty} \mathbb{E}^{\Pi}_{\{N_{k,j}^{m}(t)|\forall t\}}\bigg[\sum_{t=1}^{T}\gamma^{t-1}N_{k,j}^{m}(t)\bigg|N_{k,j}^{m}(1)=N_{k,j}^{m}\bigg]
\end{align}
be the average cost raised by the jobs of the $j$-th type which is being uploaded from $k$-th AP to $m$-th server, and 
\begin{align}
 {d}_{m,j}^{\text{ES}}(\mathbf{S}_j)\triangleq &\lim\limits_{T\to \infty} \mathbb{E}^{\Pi}_{\{N_{k,j}^{m}(t)|\forall t\}}\bigg[\sum_{t=1}^{T}\gamma^{t-1}L_{m,j}(t)\nonumber\\
 &+\beta\mathrm{I}(L_{m,j}(t)=L_{\text{max}})\bigg|\mathbf{S}_j(1)=\mathbf{S}_j\bigg]
\end{align}
 be the average cost cost raised by jobs of the $j$-th type at the $m$-th edge server. $V_{\Pi}(\mathbf{S})$ can be
written as 
\begin{align}\label{eqn:V_baseline}
V_{\Pi}(\mathbf{S})=\sum_{j\in\mathcal{J}}\bigg(\underbrace{\sum_{k\in\mathcal{K}}\sum_{m\in\mathcal{M}}{d}_{k,j,m}^{\text{AP}}(N_{k,j}^{m})+\sum_{m\in\mathcal{M}}{d}_{m,j}^{\text{ES}}(\mathbf{S}_j)}_{W_{j}(\mathbf{S}_j)}\bigg),
\end{align}
where  the expressions of  ${d}_{k,j,m}^{\text{AP}}(.)$ and ${d}_{m,j}^{\text{ES}}(.)$ are given by following two lemmas respectively.

\begin{table*}
	\centering  
	\caption{ENTRIES OF MATRIX $\mathbf{M}_{k,j,m}$}  
	\label{table:M_{k,j,m}}  
	\begin{tabular}{|c|c|c|}  
		\hline  
		& & \\[-6pt]  
		$q$&$p$&$[\mathbf{M}_{k,j,m}]_{q,p}$ \\  
		\hline
		& &  \\[-6pt]  
		$0$&$0$&$1-\lambda_{k,j}$ \\
		\hline
		& &  \\[-6pt]  
		$0$&$1$&$\lambda_{k,j}$ \\
		\hline
		& &  \\[-6pt] 
		$0$&$2,\dots,N_{\max}$&$0$ \\
		\hline
		& &  \\[-6pt] 
		$a\in\{1,\dots,N_{\max}-1\}$&$b\in\{0,\dots,a\}$&$(1-\lambda_{k,j})\binom{a}{a-b}(\frac{1}{\bar{U}_{k,j}^{m}})^{a-b}(1-\frac{1}{\bar{U}_{k,j}^{m}})^{b}+\lambda_{k,j}\binom{a}{a-b+1}(\frac{1}{\bar{U}_{k,j}^{m}})^{a-b+1}(1-\frac{1}{\bar{U}_{k,j}^{m}})^{b-1}$ \\
		\hline
		& &  \\[-6pt] 
		$a\in\{1,\dots,N_{\max}-1\}$&$a+1$&$\lambda_{k,j}(1-\frac{1}{\bar{U}_{k,j}^{m}})^{a}$ \\
		\hline
		& &  \\[-6pt] 
		$a\in\{1,\dots,N_{\max}-1\}$&$b\in\{a+2,\dots,N_{\max}\}$&$0$ \\
		\hline
		& &  \\[-6pt] 
		$N_{\max}$&$b\in\{0,\dots,N_{\max}-1\}$&${(1-\lambda_{k,j})\binom{N_{\max}}{N_{\max}-b}(\frac{1}{\bar{U}_{k,j}^{m}})^{N_{\max}-b}(1-\frac{1}{\bar{U}_{k,j}^{m}})^{b}\atop+\lambda_{k,j}\binom{N_{\max}}{N_{\max}-b+1}(\frac{1}{\bar{U}_{k,j}^{m}})^{N_{\max}-b+1}(1-\frac{1}{\bar{U}_{k,j}^{m}})^{b-1}}$ \\
		\hline
		& &  \\[-6pt] 
		$N_{\max}$&$N_{\max}$&$(1-\lambda_{k,j})(1-\frac{1}{\bar{U}_{k,j}^{m}})^{N_{\max}}+\lambda_{k,j}{N_{\max}}(\frac{1}{\bar{U}_{k,j}^{m}})(1-\frac{1}{\bar{U}_{k,j}^{m}})^{N_{\max}-1}+\lambda_{k,j}(1-\frac{1}{\bar{U}_{k,j}^{m}})^{N_{\max}}$ \\
		\hline
	\end{tabular}
\end{table*}

\begin{Lemma}[Analytical Expression of ${d}_{k,j,m}^{\text{AP}}$] \label{lem:AP}\em

${d}_{k,j}^{\text{AP}}(N_{k,j}^m)$ can be expressed as
\begin{align}\label{eqn:AP}
{d}_{k,j,m}^{\text{AP}}(N_{k,j}^m)&=	\sum_{t=1}^{+\infty}[\mathbf{u}_{k,j,m}(N_{k,j}^m)]^{\mathsf{T}}(\gamma\mathbf{M}_{k,j,m})^{t-1}\mathbf{g}\nonumber\\
&=[\mathbf{u}_{k,j,m}(N_{k,j}^m)]^{\mathsf{T}}(\mathbf{I}-\gamma\mathbf{M}_{k,j,m})^{-1}\mathbf{g},
\end{align}
where the notations of $\mathbf{u}_{k,j,m}(N_{k,j}^m)$, $\mathbf{g}$ and $\mathbf{M}_{k,j,m}$ are defined below.
\begin{itemize}
	
	\item $\mathbf{u}_{k,j,m}(N_{k,j}^m)\in \mathbb{R}^{(N_{\text{max}}+1)\times1}$, whose $N_{k,j}^m$-th entry is $1$ and other entries are all $0$.

	\item $\mathbf{g}\in \mathbb{R}^{(N_{\text{max}}+1)\times1}$, whose $i$-th entry is $i$, $i=0,1,\dots,N_{\max}$.
	\item $\mathbf{M}_{k,j,m}\in \mathbb{R}^{(N_{\text{max}}+1)\times(N_{\text{max}}+1)}$ denotes the transition matrix of the $\{N_{k,j}^m(t)|\forall t\}$, whose entries are given in table \ref{table:M_{k,j,m}}.
\end{itemize}

\end{Lemma}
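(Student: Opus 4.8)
The plan is to recognize that under the baseline policy $\Pi$, the dispatching action is fixed, so the server index $\omega_{k,j}^\Pi$ that each $(k,j)$-job stream is sent to is deterministic. Consequently the process $\{N_{k,j}^m(t)\}_t$ for a fixed triple $(k,j,m)$ evolves \emph{autonomously}: its increment in each slot depends only on the Bernoulli arrival $A_{k,j}(t)\mathrm{I}(\omega_{k,j}^\Pi=m)$ and the binomial departure $D_{k,j}^m(t)\sim\mathbb B(N_{k,j}^m(t),1/\bar U_{k,j}^m)$, both of which are conditionally independent of everything else given $N_{k,j}^m(t)$. Hence $\{N_{k,j}^m(t)\}_t$ is a time-homogeneous Markov chain on the finite state space $\{0,1,\dots,N_{\max}\}$ (after truncation at $N_{\max}$, exactly as the table encodes by folding overflow back into the $N_{\max}$ row). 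First I would verify that the transition probabilities of this chain are precisely the entries $[\mathbf M_{k,j,m}]_{q,p}$ listed in Table~\ref{table:M_{k,j,m}}: for a generic interior state $a$, conditioning on whether a new job arrives (probability $\lambda_{k,j}$) and then on how many of the in-flight jobs depart gives the two binomial terms shown, with the shift by one in the "arrival" branch accounting for the extra in-flight job; the boundary rows $q=0$ and $q=N_{\max}$ are the obvious degenerations, with the $N_{\max}$ row absorbing the cases where arrivals would push the count past $N_{\max}$.

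Next I would set up the discounted-cost evaluation. Writing $\mathbf u_{k,j,m}(N_{k,j}^m)$ for the indicator (row) vector of the initial state and $\mathbf g$ for the cost vector whose $i$-th component is $i$ (the instantaneous contribution $N_{k,j}^m(t)$ to $g_j$), the law of $N_{k,j}^m(t)$ started from $N_{k,j}^m$ is the distribution vector $[\mathbf u_{k,j,m}(N_{k,j}^m)]^{\mathsf T}\mathbf M_{k,j,m}^{\,t-1}$, so that
\begin{align*}
\mathbb E^{\Pi}\!\big[N_{k,j}^m(t)\,\big|\,N_{k,j}^m(1)=N_{k,j}^m\big]=[\mathbf u_{k,j,m}(N_{k,j}^m)]^{\mathsf T}\mathbf M_{k,j,m}^{\,t-1}\mathbf g .
\end{align*}
Summing against $\gamma^{t-1}$ and exchanging the (absolutely convergent) sum with the finite-dimensional matrix product yields the first line of \eqref{eqn:AP}. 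Then, since $\mathbf M_{k,j,m}$ is a stochastic matrix its spectral radius is $1$, so for $\gamma\in(0,1)$ the matrix $\mathbf I-\gamma\mathbf M_{k,j,m}$ is invertible and the Neumann series $\sum_{t\ge1}(\gamma\mathbf M_{k,j,m})^{t-1}=(\mathbf I-\gamma\mathbf M_{k,j,m})^{-1}$ converges, giving the closed form in the second line. Finally I would note the interchange of expectation and limit (and the finiteness of $d_{k,j,m}^{\mathrm{AP}}$) is justified by monotone convergence together with the uniform bound $N_{k,j}^m(t)\le N_{\max}$, which also guarantees $\sum_t\gamma^{t-1}N_{k,j}^m(t)\le N_{\max}/(1-\gamma)<\infty$.

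The main obstacle is the careful derivation of the interior and boundary entries of $\mathbf M_{k,j,m}$ — in particular keeping the index bookkeeping straight in the arrival branch, where a job that arrives in slot $t$ is added to the in-flight count but (being dispatched at the \emph{beginning} of the slot, before its own potential departure) is itself eligible to depart, which is why the second binomial term in the interior row is evaluated at parameter $a+1$ rather than $a$, and why the near-diagonal entry $[\mathbf M_{k,j,m}]_{a,a+1}=\lambda_{k,j}(1-1/\bar U_{k,j}^m)^a$ has exponent $a$ (only the $a$ pre-existing jobs must all fail to depart; the freshly arrived one is allowed to depart). Everything else — stochasticity of $\mathbf M_{k,j,m}$, convergence of the geometric matrix series, invertibility of $\mathbf I-\gamma\mathbf M_{k,j,m}$, and the legitimacy of the sum–expectation interchange — is routine once the chain and its transition matrix are in hand.
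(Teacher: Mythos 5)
Your overall route coincides with the paper's Appendix A: under the fixed baseline policy $\Pi$ the process $\{N_{k,j}^m(t)\}$ is a finite-state time-homogeneous Markov chain, the discounted cost equals $\sum_{t\geq 1}\gamma^{t-1}[\mathbf{u}_{k,j,m}(N_{k,j}^m)]^{\mathsf{T}}\mathbf{M}_{k,j,m}^{t-1}\mathbf{g}$, and the Neumann series is summed using the fact that a stochastic matrix has spectral radius $1$, so $(\mathbf{I}-\gamma\mathbf{M}_{k,j,m})^{-1}$ exists for $\gamma<1$. That half of your argument is sound and essentially identical to the paper's.

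There is, however, a concrete error in the part the paper's proof actually devotes its effort to: the verification of the entries of $\mathbf{M}_{k,j,m}$. You assert that a job arriving in slot $t$ is itself eligible to depart within that same slot, and you use this to explain the arrival-branch entries. This contradicts the model dynamics $N_{k,j}^m(t+1)=N_{k,j}^m(t)+A_{k,j}(t)\mathrm{I}(\omega_{k,j}(t)=m)-D_{k,j}^m(t)$ with $D_{k,j}^m(t)\sim\mathbb{B}\big(N_{k,j}^m(t),1/\bar{U}_{k,j}^{m}\big)$: departures are drawn only from the $a=N_{k,j}^m(t)$ jobs already in flight at the beginning of the slot, not from the new arrival. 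Under your assumption the arrival branch of an interior row would read $\binom{a+1}{a+1-b}\big(1/\bar{U}_{k,j}^{m}\big)^{a+1-b}\big(1-1/\bar{U}_{k,j}^{m}\big)^{b}$ (a binomial with $a+1$ trials) and the near-diagonal entry would carry exponent $a+1$; neither matches Table 1. The correct bookkeeping, as in the paper, is that with an arrival the chain lands at $b$ only if $D_{k,j}^m(t)=a-b+1$ out of the same $a$ trials, giving $\binom{a}{a-b+1}\big(1/\bar{U}_{k,j}^{m}\big)^{a-b+1}\big(1-1/\bar{U}_{k,j}^{m}\big)^{b-1}$, and $[\mathbf{M}_{k,j,m}]_{a,a+1}=\lambda_{k,j}\big(1-1/\bar{U}_{k,j}^{m}\big)^{a}$ precisely because only the $a$ pre-existing jobs can (and must all fail to) depart. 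Your parenthetical is also internally inconsistent: if the fresh arrival could depart, the transition $a\to a+1$ would require all $a+1$ jobs to remain, forcing exponent $a+1$ rather than the exponent $a$ you are trying to justify. So the Neumann-series step stands, but the derivation of the transition matrix — the substance of this lemma — needs to be redone with the departure variable tied to $N_{k,j}^m(t)$ only.
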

\begin{proof}
	Please refer to Appendix A.
\end{proof}

\begin{Lemma}[Analytical Expression of ${d}_{m,j}^{\text{ES}}$]\label{lem:ES} \em
${d}_{m,j}^{\text{ES}}(\mathbf{S}_j)$ is given by
\begin{align}
{d}_{m,j}^{\text{ES}}(\mathbf{S}_j)=&[\mathbf{q}_{m,j}(Q_{m,j})]^{\mathsf{T}}\mathbf{c}\nonumber\\
&+\sum_{t=2}^{+\infty}[\mathbf{q}_{m,j}(Q_{m,j})]^{\mathsf{T}}\Big[\gamma\mathbf{P}_{m,j}\Big(\alpha_{m,j}(t-1)\Big)\Big]^{t-1}\mathbf{c},
\end{align}
where the notations of $\mathbf{q}_{m,j}({Q}_{m,j})$,  $\mathbf{c}$, $\mathbf{P}_{m,j}\Big(\alpha_{m,j}(t-1)\Big)$ and $\alpha_{m,j}(t-1)$ are defined below.

	\begin{itemize}
		\item $\mathbf{q}_{m,j}({Q}_{m,j})\in \mathbb{R}^{(L_{\text{max}}\eta_{\text{max}}+1)\times1}$, whose $i$-th entry is 
			\begin{align}
		[\mathbf{q}_{m,j}({Q}_{m,j})]_{i}=\begin{cases}
		1, & i=\eta_{m,j}+(L_{m,j}-1)*\eta_{\text{max}}
		\cr
	0, & 
			i=0,\dots,\eta_{m,j}+(L_{m,j}-1)*\eta_{\text{max}}-1,\atop\eta_{m,j}+(L_{m,j}-1)*\eta_{\text{max}}+1,\dots,\eta_{\text{max}}L_{\text{max}}
		\end{cases}
		\end{align}
		\item $\mathbf{c}\in \mathbb{R}^{(L_{\text{max}}\eta_{\text{max}}+1)\times1}$,  whose $i$-th entry is 
		\begin{align}
		c_{i}=\begin{cases}
		\lceil i/\eta_{\text{max}}\rceil, & i=0,1,\dots,(L_{\text{max}}-1)\eta_{\text{max}}
		\cr
		L_{\text{max}}+\beta, & \text{ otherwise}
		\end{cases}
		\end{align}
		\item $\mathbf{P}_{m,j}\Big(\alpha_{m,j}(t-1)\Big)\in\mathbb{R}^{(L_{\text{max}}\eta_{\text{max}}+1)\times(L_{\text{max}}\eta_{\text{max}}+1)}$ denotes the transition matrix of the $\{Q_{m,j}(t)|\forall t\}$ given the average number of job arrival $\alpha_{m,j,t-1}$, where
		\begin{align}
		\alpha_{m,j}(t)\triangleq\sum_{k\in\mathcal{K}}\mathrm{I}(\omega_{k,j}^{\Pi}=m)\frac{[\mathbf{u}_{k,j,m}(N_{k,j}^m)]^{\mathsf{T}}(\mathbf{M}_{k,j,m})^{t-1}\mathbf{g}}{\bar{U}_{k,j}^m}.
		\end{align}
		The entries of the transition probability matrix $\mathbf{P}_{m,j}(.)$ are provided by table \ref{table:P}.
	\end{itemize}
	
\end{Lemma}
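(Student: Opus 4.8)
The plan is to follow the same template as Lemma~\ref{lem:AP}: rewrite the infinite discounted sum defining ${d}_{m,j}^{\text{ES}}(\mathbf{S}_j)$ as a series of per-slot expected costs, and express each term as an inner product of the law of the QSI $Q_{m,j}(t)$ with a fixed cost vector. First I would fix the scalar encoding of a QSI state $Q_{m,j}=(L_{m,j},\eta_{m,j})$ by the index $i=\eta_{m,j}+(L_{m,j}-1)\eta_{\text{max}}$, with the empty-queue state mapped to $i=0$; under this encoding $\mathbf{q}_{m,j}(Q_{m,j})$ is precisely the point mass at the initial QSI, and the $i$-th entry $c_i$ of $\mathbf{c}$ equals the per-slot contribution $L_{m,j}+\beta\mathrm{I}(L_{m,j}=L_{\text{max}})$ to the stage cost $g_j$ from VM $(m,j)$. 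Writing $\mathbf{v}_{m,j}(t)$ for the row-vector law of $Q_{m,j}(t)$ under $\Pi$, linearity of expectation gives ${d}_{m,j}^{\text{ES}}(\mathbf{S}_j)=\sum_{t=1}^{\infty}\gamma^{t-1}\,[\mathbf{v}_{m,j}(t)]^{\mathsf{T}}\mathbf{c}$, and the $t=1$ term is $[\mathbf{q}_{m,j}(Q_{m,j})]^{\mathsf{T}}\mathbf{c}$ because $Q_{m,j}(1)=Q_{m,j}$ is deterministic.

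It then remains to propagate $\mathbf{v}_{m,j}(t)$. The dynamics of Section~2.2 show that, conditioned on the number of fresh arrivals $\sum_{k}D_{k,j}^m(t)$, the pair $(\eta_{m,j}(t+1),L_{m,j}(t+1))$ depends on the past only through $Q_{m,j}(t)$: the $\eta$-recursion is the three-way rule (decrement, idle-hold at $0$, or fresh draw from $f_{m,j}$ upon service completion with a nonempty queue), while the $L$-recursion removes at most one served job and adds $\sum_{k}D_{k,j}^m(t)$, truncated at $L_{\text{max}}$ with the overflow indicator charged in $\mathbf{c}$. The arrivals $\sum_k D_{k,j}^m(t)$ are, however, coupled to the uploading backlogs $N_{k,j}^m(t)$, so $\{Q_{m,j}(t)\}$ is not Markov by itself; here I would invoke the mean-arrival decoupling, replacing the random $N_{k,j}^m(t)$ by its mean so that the aggregate arrival stream at VM $(m,j)$ in slot $t$ has rate $\alpha_{m,j}(t)=\sum_{k}\mathrm{I}(\omega_{k,j}^{\Pi}=m)\,\mathbb{E}^{\Pi}[N_{k,j}^m(t)]/\bar{U}_{k,j}^m$. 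Reusing the computation in the proof of Lemma~\ref{lem:AP}, the law of $N_{k,j}^m(t)$ under $\Pi$ is $[\mathbf{u}_{k,j,m}(N_{k,j}^m)]^{\mathsf{T}}(\mathbf{M}_{k,j,m})^{t-1}$, hence $\mathbb{E}^{\Pi}[N_{k,j}^m(t)]=[\mathbf{u}_{k,j,m}(N_{k,j}^m)]^{\mathsf{T}}(\mathbf{M}_{k,j,m})^{t-1}\mathbf{g}$, which recovers the stated $\alpha_{m,j}(t)$. With this substitution the QSI becomes an inhomogeneous Markov chain whose one-step kernel in slot $\tau$ is $\mathbf{P}_{m,j}(\alpha_{m,j}(\tau))$, so $\mathbf{v}_{m,j}(t)=[\mathbf{q}_{m,j}(Q_{m,j})]^{\mathsf{T}}\prod_{\tau=1}^{t-1}\mathbf{P}_{m,j}(\alpha_{m,j}(\tau))$; approximating this product by a power of the most-recent kernel and folding the discount into the matrix yields the claimed $[\mathbf{q}_{m,j}(Q_{m,j})]^{\mathsf{T}}[\gamma\mathbf{P}_{m,j}(\alpha_{m,j}(t-1))]^{t-1}\mathbf{c}$ for each $t\ge 2$.

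I expect the main obstacle to be the coupling just described: the exact law of $Q_{m,j}(t)$ depends on the whole joint process $\{\,(N_{k,j}^m(t))_k,\,Q_{m,j}(t)\,\}$, which is exactly the high-dimensional chain the paper wants to avoid, so the proof must state the mean-arrival (and product-to-power) approximations explicitly and make clear that $V_\Pi$ — and therefore the performance bound derived from it in the next section — refers to this approximate chain rather than the true one. A secondary, purely mechanical difficulty is justifying Table~\ref{table:P}: one has to combine the three-way $\eta$ update, the single-server FCFS departure, the $f_{m,j}$ re-draw, and the $\min\{\cdot,L_{\text{max}}\}$ truncation, and verify the resulting transition probabilities entrywise for all index pairs, paying special attention to the boundary rows $L_{m,j}\in\{0,1,L_{\text{max}}\}$ and to convergence of the matrix series (guaranteed by $\gamma<1$ together with stochasticity of $\mathbf{P}_{m,j}(\cdot)$).
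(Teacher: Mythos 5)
Your proposal follows exactly the route the paper intends: the paper's own proof of this lemma is a single line declaring it ``similar to the proof of Lemma~\ref{lem:AP}'', i.e., encode the QSI $Q_{m,j}$ as a point-mass vector $\mathbf{q}_{m,j}(Q_{m,j})$, pair it with the stage-cost vector $\mathbf{c}$, and propagate the law through the discounted transition matrices, which is precisely what you reconstruct. Your explicit flagging of the mean-arrival decoupling of $\sum_{k}D_{k,j}^m(t)$ via $\alpha_{m,j}(t)$ and of the product-of-kernels versus power-of-the-latest-kernel substitution is more candid than the paper, which leaves these approximations implicit in the statement itself, so no gap relative to the paper's argument.
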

\begin{proof}
	The proof is similar to the proof of Lemma \ref{lem:AP}.
\end{proof}

\begin{table*}
	\centering  
	\caption{ENTRIES OF MATRIX $[\mathbf{P}_{m,j}(\alpha_{m,j}(t-1))]$}  
	\label{table:P}  
	\begin{tabular}{|c|c|c|}  
		\hline  
		& & \\[-6pt]  
		$q$&$p$&$[\mathbf{P}_{m,j}(\alpha_{m,j}(t-1))]_{q,p}$\\  
		\hline
		& & \\[-6pt]  
		$0$&$0$&$1-\alpha_{m,j}(t-1)$\\  %
		\hline
		& & \\[-6pt]  
		$0$&$b\in\{1,\dots,N_{\text{max}}\}$&$(1-\alpha_{m,j}(t-1))f_{m,j}(b)$\\  %
		\hline
			& & \\[-6pt]  
		$0$&$\{N_{\text{max}}+1,\dots,Q_{\text{max}}N_{\text{max}}\}$&$0$\\  %
		\hline
			& & \\[-6pt]  
		$a\in\{N\times N_{\text{max}}+2,\dots,N\times N_{\text{max}}+N_{\text{max}}|N=0,1,\dots,Q_{\text{max}}-1\}$&$a-1$&$1-\alpha_{m,j}(t-1)$\\  %
		\hline
		& & \\[-6pt]  
		$a\in\{N\times N_{\text{max}}+2,\dots,N\times N_{\text{max}}+N_{\text{max}}|N=0,1,\dots,Q_{\text{max}}-1\}$&$a+N_{\text{max}}-1$&$\alpha_{m,j}(t-1)$\\  %
		\hline
		& & \\[-6pt]  
		$a\in\{N\times N_{\text{max}}+2,\dots,N\times N_{\text{max}}+N_{\text{max}}|N=0,1,\dots,Q_{\text{max}}-1\}$&${\{1,\dots,N_{\text{max}}Q_{\text{max}}\}\atop\setminus\{a-1,a+N_{\text{max}}-1\}}$&$0$\\  %
		\hline
			& & \\[-6pt]  
		$a\in\{N\times N_{\text{max}}+1|N=0,1,\dots,Q_{\text{max}}-1\}$&$b\in\{a-N_{\text{max}},\dots,a-1\}$&${(1-\alpha_{m,j}(t-1))\atop\times f_{m,j}(b-a+1+N_{\text{max}})}$\\  %
		\hline
			& & \\[-6pt]  
		$a\in\{N\times N_{\text{max}}+1|N=0,1,\dots,Q_{\text{max}}-1\}$&$b\in\{a,\dots,a-1+N_{\text{max}}\}$&${(1-\alpha_{m,j}(t-1))\atop\times f_{m,j}(b-a+1)}$\\  %
		\hline
			& & \\[-6pt]  
		$a\in\{N\times N_{\text{max}}+1|N=0,1,\dots,Q_{\text{max}}-1\}$&${\{1,\dots,N_{\text{max}}Q_{\text{max}}\}\setminus\atop\{\{a-N_{\text{max}},\dots,a-1\}\cup \{a,\dots,a-1+N_{\text{max}}\}\}}$&$0$\\  %
		\hline
	\end{tabular}
\end{table*}

\begin{figure*}
	\centering
	\includegraphics[width=1\textwidth]{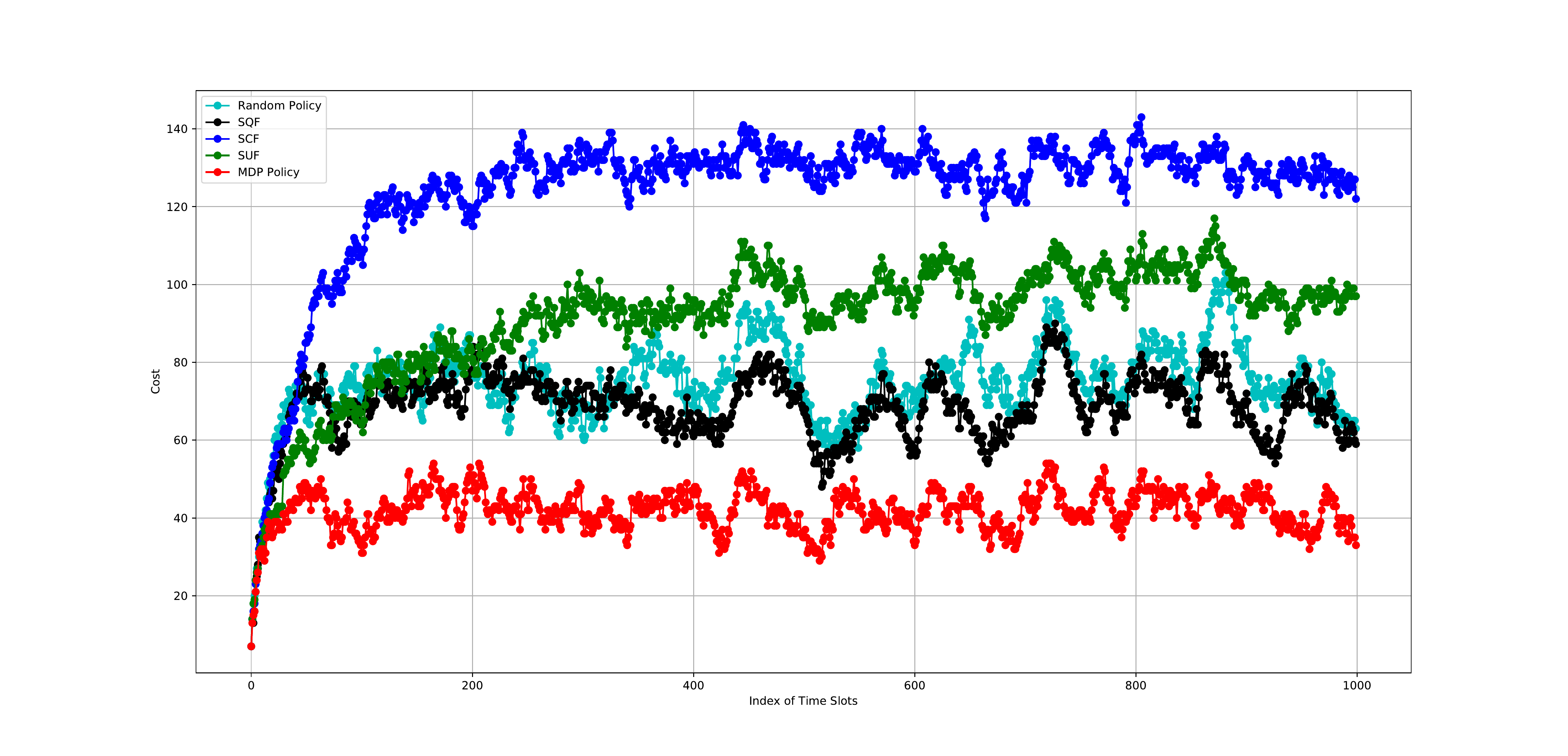}
	\caption{The cost versus time slots when the average uploading delays and the job computation times are comparable. For example, the average uploading delay of the first job type from $1$-st AP to $1$-st edge server is $10$ time slots, and the job computation time of the first job type at the $1$-st edge server ranges from $10$ to $15$ time slots.}
	\label{fig:timeline}
\end{figure*}

\subsection{Scheduling Policy with One-Step Policy Iteration}
In this part, we use the value function of the baseline policy $\{V_{\Pi}(\mathbf{S})|\forall \mathbf{S}\}$ derived in the previous part to approximate the value function of the optimal policy $\{{V}(\mathbf{S})|\forall \mathbf{S}\}$ in \eqref{eqn:bellman}, and derive the proposed scheduling policy. Because the expression of value function $\{V_{\Pi}(\mathbf{S})|\forall \mathbf{S}\}$ is provided, the value iteration
can be avoid, which significantly reduces the computation complexity. 
Note that $V_{\Pi}(\mathbf{S})$, $g(\mathbf{S}(t),\Omega(\mathbf{S}(t)))$ and $\Pr\Big(\mathbf{S}(t+1)\Big|\mathbf{S}(t),\Omega(\mathbf{S}(t))\Big) $ in \eqref{eqn:decouple} can be decoupled for each type of job, Problem \ref{Pro:main} with the value function approximation can be decoupled into the following per-type optimization.

\begin{Problem}[Sub-Optimal Scheduling Problem of $j$-th Type]\label{Pro:sub-opt} \em
	\begin{align}
	\min_{\Omega_j(t)}&\bigg[g_j\bigg(\mathbf{S}_j(t),\Omega_j(\mathbf{S}_j(t))\bigg)\nonumber\\
	&+\gamma\!\!\!\sum_{\mathbf{S}_j(t+1)}\!\!\!\Pr\bigg(\mathbf{S}_j(t+1)\bigg|\mathbf{S}_j(t),\Omega_j(\mathbf{S}_j(t))\bigg) W_{j}(\mathbf{S}_j(t+1))\bigg],
	\end{align}
where $W_{j}(.)$ is defined in \eqref{eqn:V_baseline}.
\end{Problem}

 Problem \ref{Pro:sub-opt} is NP-hard due to the combinatorial search of the computation edge servers, and it is difficult to find the optimal solution. Hence, instead of the intractable optimal solution, we propose a sub-optimal low-complexity solution as follows.

\begin{Algorithm}[Proposed Scheduling Policy] \label{alg:proposed}
With the system state of $j$-th type jobs $\mathbf{S}_{j}$, the proposed scheduling policy $\Pi_j^{\dagger}(\mathbf{S}_{j})$ is given below.\em
	\begin{itemize}
		\item  {\bf Step 1: }Let $\ell=0$. Initialize dispatching action with the $\Pi_j^{\ell}= \{\omega_{k,j}^{\ell}=\omega_{k,j}^{\Pi}|\forall k\}$ and let $X=W_j(\mathbf{S}_j(t))$.
		\item {\bf Step 2: }Let $\ell=\ell+1$ and update the set of dispatching action from $\Pi_{j}^{l-1}$ to $\Pi_{j}^l$ as 
		$\omega_{k,j}^{\ell} = \omega_{k,j}^{\ell-1}, \forall \ell \neq l$, and $\omega_{\ell,j}^{\ell}$ is the solution of the following optimization problem. 
		\begin{align}	
	Y_{\ell}=&\min_{\omega_{\ell,j}\in\mathcal{M}}\bigg[g_j\bigg(\mathbf{S}_j(t),\{\omega_{k,j}^{\ell}|\forall k\neq \ell\}\cup\{\omega_{\ell,j}\} \bigg)\nonumber\\
		&+\gamma\!\!\!\sum_{\mathbf{S}_j(t+1)}\!\!\!\Pr\bigg(\mathbf{S}_j(t+1)\bigg|\mathbf{S}_j(t),\{\omega_{k,j}^{\ell}|\forall k\neq \ell\}\cup\{\omega_{\ell,j}\}\bigg)\nonumber\\
		&\quad\quad\quad\quad\quad\times W_{j}(\mathbf{S}_j(t+1))\bigg].
		\end{align}
		If $Y_{\ell}<X$, let $X=Y_{\ell}$.

   \item {\bf Step 3:} If $\ell=K$, algorithm terminates. The proposed scheduling policy is $\Pi_{j}^{\dagger}(\mathbf{S}_j(t))=\Pi_{j}^{K}(\mathbf{S}_j(t))$. Otherwise, go to Step 2.

		\end{itemize}
\end{Algorithm}

\begin{figure*}
	\centering
	\includegraphics[width=1\textwidth]{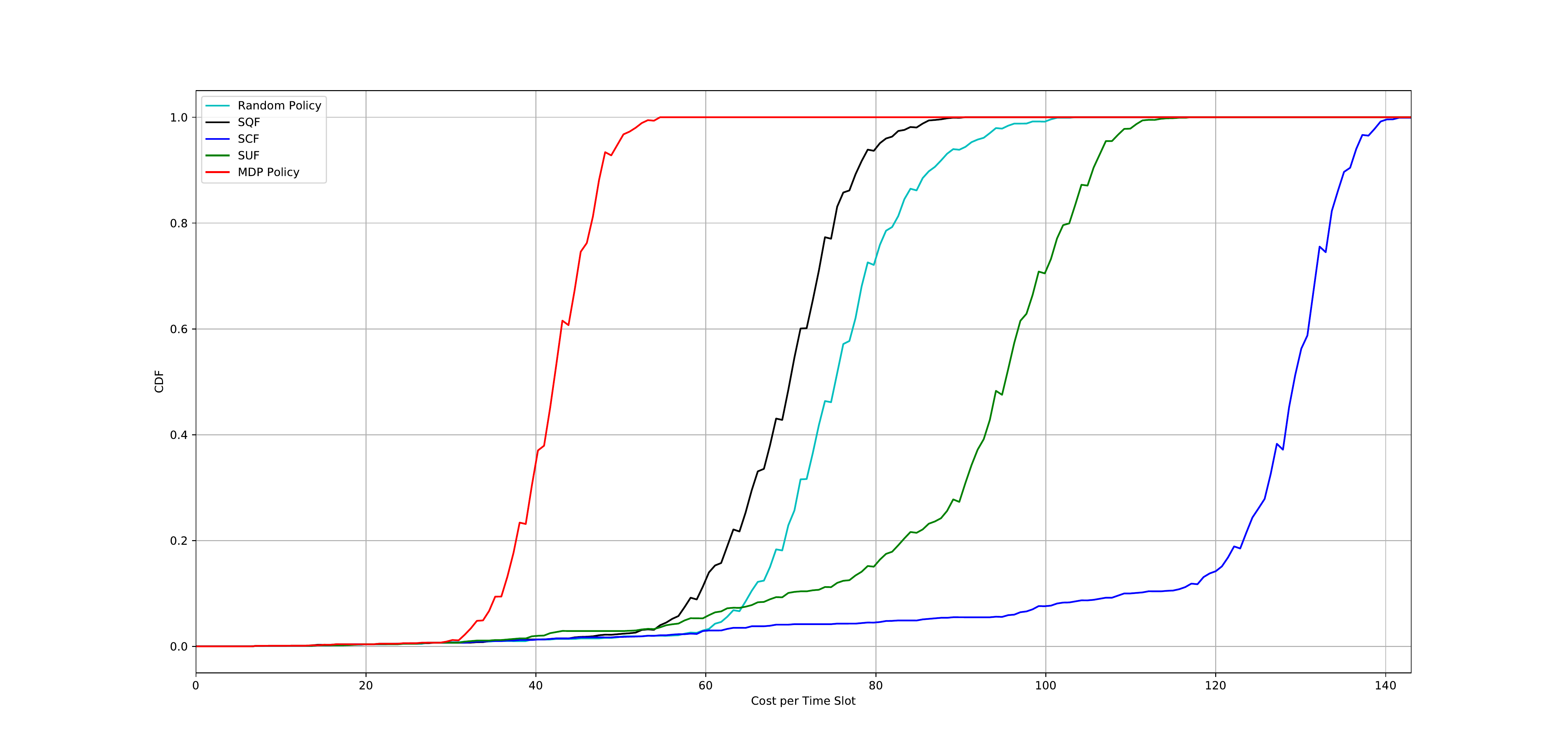}
	\caption{Cumulative distribution function(CDF) of the cost per time slot when average uploading delays and job computation times are comparable. For example, the average uploading delay of the first job type from first AP to first edge server is $10$ time slots, and the job computation time of the first job type at the first edge server ranges from $10$ to $15$ time slots.}
	\label{fig:cdf1}
\end{figure*}

The complexity of Algorithm \ref{alg:proposed} is $O(KM)$.
 Although it is sub-optimal solution of Problem \ref{Pro:sub-opt}, its performance is superior to the baseline policy, which is summarized in the following lemma.

\begin{Lemma}[Performance Bound] \em Let $V_{{\Pi}^{\dagger}}(.)$ be the value function of the policy ${\Pi}^{\dagger}\triangleq\{\Pi_{j}^{\dagger}|\forall j\in \mathcal{J}\}$, i.e.,
\begin{align}
	&V_{{\Pi}^{\dagger}}(\mathbf{S})\nonumber\\
	&\quad \triangleq \lim\limits_{T\to \infty}\mathbb{E}_{\{\mathbf{S}(t)|\forall t\}}^{{\Pi}^{\dagger}}\bigg[\sum_{t=1}^{T} \gamma^{t-1}g\bigg(\mathbf{S}(t),{\Pi}^{\dagger}(\mathbf{S}(t))\bigg)\bigg|\mathbf{S}(1)=\mathbf{S} \bigg],
\end{align}
 we have
\begin{align}
	V(\mathbf{S})\leq 	V_{{\Pi}^{\dagger}}(\mathbf{S})\leq	V_{{\Pi}}(\mathbf{S}), \forall \mathbf{S}.
\end{align}
\end{Lemma}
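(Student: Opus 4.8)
\emph{Proof proposal.} The statement is a one-step policy-iteration (policy-improvement) bound, so the plan is to prove the two inequalities separately. Observe first that all value functions appearing here are finite and uniformly bounded, since the per-slot cost $g$ is bounded (the counts satisfy $N_{k,j}^m\le N_{\max}$ and $L_{m,j}\le L_{\max}$, and $\beta$ is a finite constant) and $\gamma\in(0,1)$; hence the Bellman evaluation operators used below are $\gamma$-contractions in the sup-norm and all infinite-horizon limits are well defined \cite{DP}. The left inequality $V(\mathbf{S})\le V_{\Pi^{\dagger}}(\mathbf{S})$ is then immediate from optimality of $V$: by Problem~\ref{Pro:main}, $V(\mathbf{S})=\min_{\Omega}\bar G(\Omega,\mathbf{S})\le\bar G(\Pi^{\dagger},\mathbf{S})=V_{\Pi^{\dagger}}(\mathbf{S})$, because $\Pi^{\dagger}$ is one admissible stationary policy.

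For the right inequality I would first record the fixed-point equation satisfied by the baseline value function. Splitting off the $t=1$ term in the definition of $V_{\Pi}$ and combining the decoupling \eqref{eqn:decouple} with $V_{\Pi}(\mathbf{S})=\sum_{j}W_j(\mathbf{S}_j)$ from \eqref{eqn:V_baseline}, one gets, for every $j\in\mathcal{J}$ and every $\mathbf{S}_j$,
\begin{align}\label{eqn:baseline-fp}
W_j(\mathbf{S}_j)=g_j\big(\mathbf{S}_j,\Pi_j\big)&+\gamma\sum_{\mathbf{S}_j(t+1)}\Pr\big(\mathbf{S}_j(t+1)\big|\mathbf{S}_j,\Pi_j\big)\nonumber\\
&\times W_j(\mathbf{S}_j(t+1)).
\end{align}
Next I would show that the policy produced by Algorithm~\ref{alg:proposed} has one-step look-ahead cost (with $W_j$ as the terminal value) no larger than $W_j(\mathbf{S}_j)$. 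The key point is that the numbers $Y_\ell$ computed in Step~2 are non-increasing in $\ell$: in iteration $\ell$ the minimization over $\omega_{\ell,j}\in\mathcal{M}$ admits the choice $\omega_{\ell,j}=\omega_{\ell,j}^{\ell-1}$, for which the bracketed quantity is exactly the look-ahead cost of $\Pi_j^{\ell-1}$, that is $Y_{\ell-1}$ when $\ell\ge 2$ and, when $\ell=1$, equals $W_j(\mathbf{S}_j)$ by \eqref{eqn:baseline-fp} (since $\Pi_j^{0}$ uses the baseline action). Hence $Y_K\le Y_{K-1}\le\cdots\le Y_1\le W_j(\mathbf{S}_j)$, and since $\Pi_j^{\dagger}=\Pi_j^{K}$ this says
\begin{align}\label{eqn:improve}
g_j\big(\mathbf{S}_j,\Pi_j^{\dagger}(\mathbf{S}_j)\big)&+\gamma\sum_{\mathbf{S}_j(t+1)}\Pr\big(\mathbf{S}_j(t+1)\big|\mathbf{S}_j,\Pi_j^{\dagger}(\mathbf{S}_j)\big)\nonumber\\
&\times W_j(\mathbf{S}_j(t+1))\le W_j(\mathbf{S}_j),\quad\forall j,\mathbf{S}_j.
\end{align}

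Finally I would sum \eqref{eqn:improve} over $j\in\mathcal{J}$ and use $V_{\Pi}=\sum_j W_j$ together with the product form \eqref{eqn:decouple} to obtain the operator inequality $(T_{\Pi^{\dagger}}V_{\Pi})(\mathbf{S})\le V_{\Pi}(\mathbf{S})$ for all $\mathbf{S}$, where $T_{\Pi^{\dagger}}$ denotes the Bellman evaluation operator of the stationary policy $\Pi^{\dagger}$. Since $T_{\Pi^{\dagger}}$ is monotone, iterating yields the decreasing chain $V_{\Pi}\ge T_{\Pi^{\dagger}}V_{\Pi}\ge T_{\Pi^{\dagger}}^{2}V_{\Pi}\ge\cdots$; since $T_{\Pi^{\dagger}}$ is a $\gamma$-contraction, these iterates converge to its unique fixed point $V_{\Pi^{\dagger}}$, whence $V_{\Pi}\ge V_{\Pi^{\dagger}}$, which together with the first part completes the proof. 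The step I expect to need the most care is the monotonicity of $\{Y_\ell\}$ in Step~2 — namely that the greedy coordinate-wise update of Algorithm~\ref{alg:proposed} can never do worse than the baseline, because at every coordinate the previously used (and in particular the baseline) dispatching action stays feasible — together with the bookkeeping that lifts the per-type inequality \eqref{eqn:improve} to the aggregate operator inequality through the decoupling \eqref{eqn:decouple}.
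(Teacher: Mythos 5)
Your proof is correct, and it is more self-contained than the paper's: the paper disposes of the lemma in two lines, getting $V(\mathbf{S})\le V_{\Pi^{\dagger}}(\mathbf{S})$ from optimality of $V$ (exactly as you do) and then simply citing the Policy Improvement Property of \cite{DP} for $V_{\Pi^{\dagger}}(\mathbf{S})\le V_{\Pi}(\mathbf{S})$, whereas you re-derive that property from first principles via the one-step inequality $T_{\Pi^{\dagger}}V_{\Pi}\le V_{\Pi}$ followed by monotonicity and $\gamma$-contraction of the evaluation operator. The genuinely valuable extra content in your argument is the verification that the textbook property actually applies here: $\Pi^{\dagger}$ produced by Algorithm \ref{alg:proposed} is \emph{not} the exact greedy minimizer of Problem \ref{Pro:sub-opt} but a coordinate-wise (per-AP) descent, so the standard statement does not apply verbatim; your observation that at every coordinate the previously used action (in particular the baseline action $\omega_{\ell,j}^{\Pi}$) remains feasible, giving the chain $Y_{K}\le\cdots\le Y_{1}\le W_{j}(\mathbf{S}_{j})$, is precisely the missing bookkeeping that justifies the paper's citation, together with the per-type decoupling through \eqref{eqn:decouple} and \eqref{eqn:V_baseline}. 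The only caveat to keep in mind is that your fixed-point identity for $W_{j}$ (and hence the whole right-hand inequality) requires $W_{j}$ to be the exact discounted value function of the baseline policy $\Pi$ for type $j$; this is consistent with how the paper states Lemmas \ref{lem:AP} and \ref{lem:ES}, but to the extent that the expression in Lemma \ref{lem:ES} replaces the random arrivals at the servers by their means, both your proof and the paper's inherit the same implicit assumption. In short: same skeleton as the paper, but you buy a rigorous, self-contained justification of the improvement step at the cost of a longer argument, while the paper buys brevity by leaning on \cite{DP}.
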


\begin{figure*}
	\centering
	\includegraphics[width=1\textwidth]{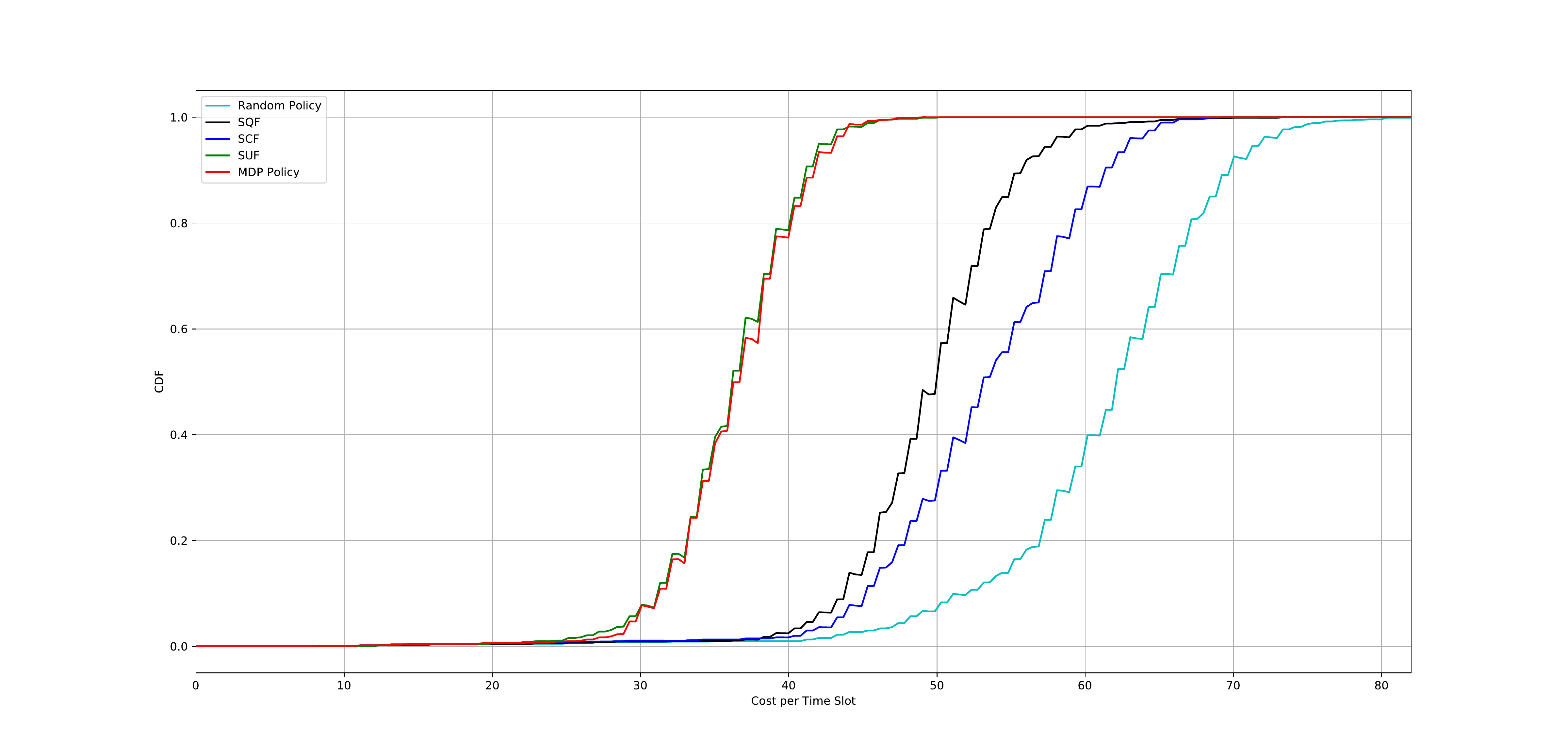}
	\caption{Cumulative distribution function (CDF) of the cost per time slot when average uploading delays are dominant. For example, the average uploading delay of the first job type from first AP to first edge server is $10$ time slots, and the average job computation time of the first job type at the first edge server is $1$ time slot.}
	\label{fig:cdf2}
\end{figure*}

\begin{figure*}
	\centering
	\includegraphics[width=1\textwidth]{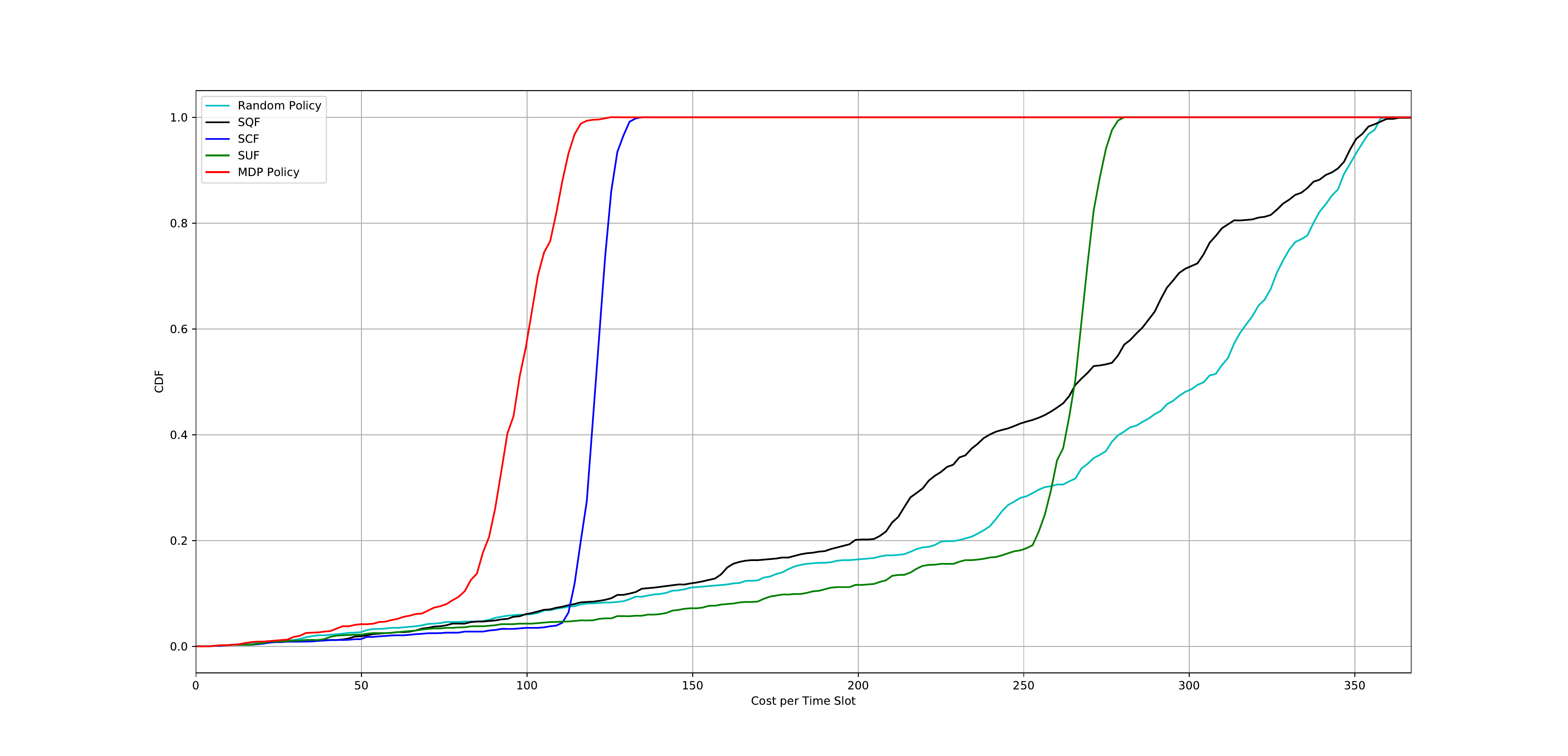}
	\caption{Cumulative distribution function (CDF) of the cost per time slot when average job computation times are dominant. For example, the average uploading delay of the first job type from first AP to first edge server is $1$ time slots, and the average job computation time of the first job type at the first edge server ranges from $10$ to $15$ time slots.}
	\label{fig:cdf3}
\end{figure*}

\begin{proof}
	Since policy $\Pi^{\dagger}$ is not optimal policy, $	V(\mathbf{S})\leq V_{{\Pi}^{\dagger}}(\mathbf{S})$ is straightforward.  Due to the {\em Policy Improvement Property} in \cite{DP}, we have $V_{{\Pi}^{\dagger}}(\mathbf{S})\leq	V_{{\Pi}}(\mathbf{S})$.
\end{proof}

	To the best of our knowledge, the performance can hardly be bounded in the existing approximate MDP methods.
 This paper shows a low complexity
approximate MDP method whose performance can be bounded analytically. 
\begin{Remark}[Complexity Analysis] \em
 In the above approximation approach, the computation complexity of
	value function calculation is $O(J(KM+M))$. 
	On the other hand, the optimal solution of MDP suffers from the curse of dimensionality. Specifically, the computation complexity of the conventional value
	iteration algorithm is $O\Big((N_{\text{max}})^{2KJM}(L_{\text{max}}\eta_{\text{max}})^{2MJ}M^{KJ}\Big)$ and the memory requirement is $O\Big((N_{\text{max}})^{KJM}(L_{\text{max}}\eta_{\text{max}})^{MJ}\Big).$
\end{Remark}

\section{Numerical Simulations}

In this section, we evaluate the performance of the proposed
low-complexity sub-optimal scheduling policy (Algorithm \ref{alg:proposed}) by numerical
simulations. In the simulation, there are $5$ APs, $3$ edge servers and $10$ types of jobs in the network. The computation time of each job type at each edge server follows the uniform distribution.
The following four benchmark schemes
are compared with the proposed scheduling scheme.

	\begin{itemize}
		\item SQF (shortest queue first) algorithm: APs dispatch the jobs to the edge server with the shortest queue length of the same type;
		\item SUF (shortest uploading time first) algorithm: APs dispatch the jobs to the edge server with shortest \emph{expected uploading time};
		\item SCF (shortest computation time first) algorithm: APs dispatch the jobs to the edge server with shortest \emph{expected computation time} for that job type;
		\item Random algorithm: APs randomly dispatch the jobs to the edge server at each time slot.
	\end{itemize}

Moreover, in the proposed scheme, we use the SCF algorithm as the baseline policy \ref{Pol:baseline}. In Fig. \ref{fig:timeline} and Fig. \ref{fig:cdf1}, the performance of the five schemes are compared when average uploading delays and job computation times are comparable. It can be observed that the proposed algorithm has significantly less cost per time slot than all the benchmarks. Note that in SQF algorithm, the jobs dispatching can be adjusted according to system state; whereas, the SUF and SCF algorithms have fixed jobs dispatching action in all the time slots. SQF algorithm has better performance than SUF and SCF algorithms.

 In Fig. \ref{fig:cdf2}, the average uploading delays are dominant, compared with the average job computation times. The performance of proposed algorithm is almost the same as the performance of SUF algorithm. Hence, in the edge computing network with dominant uploading delays, APs tend to dispatch the jobs to the edge server with shortest \emph{expected uploading time}.
 
 In Fig. \ref{fig:cdf3}, the average job computation times are dominant, compared with the average uploading delays. It can be observed that the proposed algorithm has less cost per time slot than all the benchmarks. Note that SCF algorithm has better performance than other benchmarks, the APs tend to dispatch the jobs to the edge server with shortest \emph{expected computation time} in this situation.

\section{Conclusions}
In this paper, we consider the cooperative jobs dispatching in an edge computing network with multiple APs and edge servers. The job uploading delay and computation time are both random and unpredictable. We formulate the joint optimization of jobs dispatching at all the APs and all the time slots as an infinite-horizon MDP with discounted cost. In order to avoid the curse of dimensionality, we also introduce a low-complexity sub-optimal solution based on one-step policy iteration from a baseline policy. The analytical performance bound is derived. Finally, it is shown by simulations that our proposed scheme has better performance than various benchmarks.

As the future work, we shall extend the proposed algorithm to one new scenario where the delay of collecting complete system state information at each AP is not negligible. Moreover, the memoryless distribution of uploading delay can also be generalized to arbitrary distribution.

\section*{Appendix}

\noindent\textbf{A)}\quad PROOF OF LEMMA \ref{lem:AP}

The entries of matrix $\mathbf{M}_{k,j,m}$ is
\begin{align*}
[\mathbf{M}_{k,j,m}]_{q,p}\triangleq\Pr\bigg(N_{k,j}^m(t+1)=p\bigg|N_{k,j}^m(t)=q,\Pi\bigg).
\end{align*}
Then, we have following discussion on $[\mathbf{M}_{k,j,m}]$.
\begin{itemize}
	\item $q=0$, $p=0$: There are no $j$-th type job arriving at the $k$-th AP. Hence $[\mathbf{M}_{k,j,m}]_{0,0}=1-\lambda_{k,j}$.
	\item $q=0$, $p=1$: There is one $j$-th type job arriving at the $k$-th AP. Hence $[\mathbf{M}_{k,j,m}]_{0,1}=\lambda_{k,j}$.
	\item $q=a\in\{1,\dots,N_{\text{max}}-1\}$, $p=b\in\{0,\dots,a\}$: (i)
	There are no $j$-th type job arriving at the $k$-th AP and $D_{k,j}^{m}(t)=(a-b)$; (ii) There is one $j$-th type job arriving at the $k$-th AP and $D_{k,j}^{m}(t)=(a-b)-1$.
    Hence,  $[\mathbf{M}_{k,j,m}]_{a,b}=(1-\lambda_{k,j})\binom{a}{a-b}(\frac{1}{\bar{U}_{k,j}^{m}})^{a-b}(1-\frac{1}{\bar{U}_{k,j}^{m}})^{b}+\lambda_{k,j}\binom{a}{a-b+1}(\frac{1}{\bar{U}_{k,j}^{m}})^{a-b+1}(1-\frac{1}{\bar{U}_{k,j}^{m}})^{b-1}$.
    \item $ q=a\in\{1,\dots,N_{\text{max}}-1\}$, $p=a+1$: There is one $j$-th type job arriving at the $k$-th AP and  $D_{k,j}^{m}(t)=0$. Hence, $[\mathbf{M}_{k,j,m}]_{a,a+1}=\lambda_{k,j}(1-\frac{1}{\bar{U}_{k,j}^{m}})^{a}$.
    \item  $q=N_{\max}$, $p=b\in\{0,\dots,N_{\max}-1\}$: (i)
    There are no $j$-th type job arriving at the $k$-th AP and $D_{k,j}^{m}(t)=(N_{\text{max}}-b)$; (ii) There is one $j$-th type job arriving at the $k$-th AP and $D_{k,j}^{m}(t)=(N_{\text{max}}-b)-1$. Hence, $[\mathbf{M}_{k,j,m}]_{N_{\text{max}},b}=(1-\lambda_{k,j})\binom{N_{\max}}{N_{\max}-b}(\frac{1}{\bar{U}_{k,j}^{m}})^{N_{\max}-b}(1-\frac{1}{\bar{U}_{k,j}^{m}})^{b}+\lambda_{k,j}\binom{N_{\max}}{N_{\max}-b+1}(\frac{1}{\bar{U}_{k,j}^{m}})^{N_{\max}-b+1}(1-\frac{1}{\bar{U}_{k,j}^{m}})^{b-1}$.
    \item  $q=N_{\max}$, $p=N_{\max}$: (i)There are no $j$-th type job arriving at the $k$-th AP and $D_{k,j}^{m}(t)=0$; (ii) There is one $j$-th type job arriving at the $k$-th AP and $D_{k,j}^{m}(t)=1$; There is one $j$-th type job arriving at the $k$-th AP and $D_{k,j}^{m}(t)=0$. Hence, $[\mathbf{M}_{k,j,m}]_{N_{\text{max}},N_{\text{max}}}=(1-\lambda_{k,j})(1-\frac{1}{\bar{U}_{k,j}^{m}})^{N_{\max}}+\lambda_{k,j}{N_{\max}}(\frac{1}{\bar{U}_{k,j}^{m}})(1-\frac{1}{\bar{U}_{k,j}^{m}})^{N_{\max}-1}+\lambda_{k,j}(1-\frac{1}{\bar{U}_{k,j}^{m}})^{N_{\max}}$.
    \item Otherwise, $[\mathbf{M}_{k,j,m}]_{q,p}=0$.
\end{itemize}
    \spaceblank
   To prove the second equity of equation \eqref{eqn:AP}, we first show $||\gamma\mathbf{M}_{k,j,m}||<1$, where $||.||$ is the matrix norm. It clear that $||\gamma\mathbf{M}_{k,j,m}||=\gamma\rho(\mathbf{M}_{k,j,m})$, where $\rho(\mathbf{M}_{k,j,m})$ is the spectrum radius of  $\mathbf{M}_{k,j,m}$. According to Perron-Frobenius Theorem \cite{matrix},  the spectrum radius of transition probability matrix is $1$. Since $\mathbf{M}_{k,j,m}$ is transition probability matrix, we have $||\gamma\mathbf{M}_{k,j,m}||=\gamma<1$. Let $\mathbf{X}_n=\sum_{t=1}^{n}(\gamma\mathbf{M}_{k,j,m})^{t-1}$, we have
   \begin{align*}
   \mathbf{X}_n=(\mathbf{I}-\gamma\mathbf{M}_{k,j,m})^{-1}-(\gamma\mathbf{M}_{k,j,m})^{n+1}(\mathbf{I}-\gamma\mathbf{M}_{k,j,m})^{-1}.
   \end{align*}
Then
\begin{align*}
\lim\limits_{n\to+\infty}  \mathbf{X}_n=(\mathbf{I}-\gamma\mathbf{M}_{k,j,m})^{-1}.
\end{align*}
Hence, the \eqref{eqn:AP} is straightforward. 
\spaceblank








\begin{thebibliography}{99}

\bibitem{Junzhang2016}
Y. Mao, J. Zhang, K. B. Letaief. Dynamic computation offloading
for mobile-edge computing with energy harvesting devices[J]. IEEE Journal of Selected Areas in Communications, 2016, 34(12):  3590–3605.

\bibitem{KBHuang2015}
C. You, K. Huang. Wirelessly powered mobile computation offloading: Energy savings maximization[C]//IEEE Global Communications Conference, San Diego, 2015: 1-6.

\bibitem{KBHuang2019}
Y. Tao, C. You, P. Zhang, at el. Stochastic Control of Computation Offloading to a Helper With a Dynamically Loaded CPU[J]. IEEE Transactions on Wireless Communications, 2019, 18(2): 1247-1262.

\bibitem{Du2018}
J. Du, L. Zhao, J. Feng, at el. Computation Offloading and Resource Allocation in Mixed Fog/Cloud Computing
Systems with Min-Max Fairness Guarantee[J]. IEEE Transactions on Communications, 2018, 66(4): 1594–1608.

\bibitem{KBHuang2016}
C. You, K. Huang. Multiuser resource allocation for mobileedge
computation offloading[C]//IEEE Global Communications Conference, Washington, 2016: 1-6.

\bibitem{XuChen2016}
X. Chen, L. Jiao, W. Li, at el. Efficient multi-user computation
offloading for mobile-edge cloud computing[J]. IEEE/ACM Transactions on Networking, 2016, 24(5): 2795–2808.


\bibitem{jieXu2018}
J. Xu, L. Chen, P. Zhou. Joint Service Caching and Task Offloading for Mobile Edge Computing in Dense Networks[C]//IEEE Conference on Computer Communications, Honolulu, 2018: 1-9.
\bibitem{Tan2017}
H. Tan, Z. Han, X. Li, et al. Online job dispatching and scheduling in edge-clouds[C]//IEEE Conference on Computer Communications, Atlanta, 2017: 1-9.


\bibitem{liang2017}
M. Jia, J. Cao, W. Liang. Optimal Cloudlet Placement and User to Cloudlet Allocation in Wireless Metropolitan Area Networks[J]. IEEE Transactions on Cloud Computing, 2017, 5(4): 725-737.

\bibitem{liang2015}
Z. Xu, W. Liang, W. Xu, at el. Capacitated cloudlet placements in Wireless Metropolitan Area Networks[C]//IEEE Conference on Local Computer Networks, Clearwater Beach, 2015: 1-9.




\bibitem{DP}
D. P. Bertsekas. Dynamic Programming and Optimal Control[M], 4th ed, vol. 2.
Belmont, MA, USA: Athena Scientific, 2012.

\bibitem{Ruiwang2011}
R. Wang, V. K. N. Lau, Y. Cui. Queue-aware distributive
resource control for delay-sensitive two-hop MIMO cooperative systems[J].
IEEE Transactions on Signal Processing, 2011, 59(1): 341-350.   

\bibitem{Ruiwang2013}
R. Wang, V. K. N. Lau. Delay-aware two-hop cooperative relay
communications via approximate MDP and stochastic learning[J]. IEEE
Transactions on Information Theory, 2013, 59(11): 7645-7670.

\bibitem{Cui2012}
Y. Cui, V. K. N. Lau, R. Wang, at al. A Survey on Delay-Aware Resource Control for Wireless Systems — Large derivation Theory, Stochastic Lyapunov Drift and Approximate MDP using Distributive Stochastic Learning[J]. IEEE Transactions on Information Theory, 2012, 58(3): 1677-1701.


\bibitem{Lv2018-icc}
B. Lv, L. Huang, R. Wang. Cellular Offloading via Downlink Cache Placement[C]//IEEE International Conference on Communications, Kansas City, 2018: 1-7.

\bibitem{Lv2018-gc}
B. Lv, R. Wang, Y. Cui, et al. Joint Optimization of File Placement and Delivery in Cache-Assisted Wireless Networks[C]//IEEE Global Communications Conference, Abu Dhabi, 2018: 1-7.

\bibitem{Lv2019}
B. Lv, L. Huang, R. Wang. Joint Downlink Scheduling for File Placement and Delivery in Cache-Assisted Wireless Networks With Finite File Lifetime[J]. IEEE Transactions on Communications, 2019, 67(6): 4177-4192.

\bibitem{matrix}
C. D. Meyer, at el. Matrix Analysis and Applied Linear Algebra[M].
Philadelphia, PA, USA: Society for Industrial and Applied Mathematics,
2000.
\end{thebibliography}
\end{document}